\DeclareMathOperator{\End}{End}
\DeclareMathOperator{\GL}{GL}
\DeclareMathOperator{\Char}{char}
\newcommand\Aut{{\mathrm{Aut}}}
\newcommand\ord{{\mathrm{ord}}}
\newcommand{\XX}{\mathcal{X}}
\def\F{{\mathbb F}}
\title{On the Classical Hardness of the Semidirect Discrete Logarithm Problem in Finite Groups}
\titlerunning{Classical Hardness of SDLP in Finite Groups}
\author{Mohammad Ferry Husnil Arif\inst{1}\orcidID{0009-0004-4965-9380} \and Muhammad Imran\inst{1}\inst{2}\orcidID{0000-0002-0776-2644}}
\institute{Universitas Indonesia, Indonesia \and University of Birmingham, UK}
\begin{document}

\raggedbottom

\maketitle

\begin{abstract}
The semidirect discrete logarithm problem (SDLP) in finite groups was proposed as a foundation for post-quantum cryptographic protocols, based on the belief that its non-abelian structure would resist quantum attacks. However, recent results have shown that SDLP in finite groups admits efficient quantum algorithms, undermining its quantum resistance. This raises a fundamental question: does the SDLP offer any computational advantages over the standard discrete logarithm problem (DLP) against classical adversaries? In this work, we investigate the classical hardness of SDLP across different finite group platforms. We establish that the group-case SDLP can be reformulated as a generalized discrete logarithm problem, enabling adaptation of classical algorithms to study its complexity. We present a concrete adaptation of the Baby-Step Giant-Step algorithm for SDLP, achieving time and space complexity $O(\sqrt{r})$ where $r$ is the period of the underlying cycle structure. Through theoretical analysis and experimental validation in SageMath, we demonstrate that the classical hardness of SDLP is highly platform-dependent and does not uniformly exceed that of standard DLP. In finite fields $\mathbb{F}_p^*$, both problems exhibit comparable complexity. Surprisingly, in elliptic curves $E(\mathbb{F}_p)$, the SDLP becomes trivial due to the bounded automorphism group, while in elementary abelian groups $\mathbb{F}_p^n$, the SDLP can be harder than DLP, with complexity varying based on the eigenvalue structure of the automorphism. Our findings reveal that the non-abelian structure of semidirect products does not inherently guarantee increased classical hardness, suggesting that the search for classically hard problems for cryptographic applications requires more careful consideration of the underlying algebraic structures.

\keywords{Semidirect discrete logarithm problem \and Post-quantum cryptography \and Classical algorithms \and Group-based cryptography}
\end{abstract}

\section{Introduction}
The presumed intractability of the discrete logarithm problem (DLP) has long been a cornerstone of modern cryptographic security, particularly through its application in the Diffie-Hellman key exchange protocol. Numerous cryptographic schemes, including key exchange, digital signatures, and public-key encryption, rely on the hardness of the DLP. However, the advent of quantum computing, particularly Shor's algorithm \cite{shor1994algorithms}, has fundamentally altered the security landscape. Shor's algorithm demonstrates that the DLP can be solved efficiently in polynomial time when implemented on a sufficiently large quantum computer, rendering classical cryptographic schemes based on the DLP vulnerable to quantum attacks.

In response to this existential threat, researchers have sought alternative approaches to construct cryptographic protocols resistant to quantum adversaries. One prominent direction is the development of new DLP variants designed to circumvent Shor's algorithm. Since the applicability of Shor's algorithm relies crucially on the algebraic structures of the underlying group, two general strategies have emerged.

The first strategy considers the DLP analogue in algebraic objects with less structure. One approach considers the DLP in finite commutative semigroups, where the absence of full group properties was initially thought to obstruct Shor's algorithm. This is based on the observation that Shor's algorithm to compute $a$ from the elements $g, h=g^a\in G$, for a finite commutative group $G$, requires the evaluation of the oracle $f(x,y)=g^xh^{-y}$. Therefore, in the case of semigroup, the oracle is not well-defined.
However, Childs and Ivanyos \cite{childs2014quantum} demonstrated that a modified version of Shor's algorithm can efficiently solve the DLP in finite commutative semigroups, undermining its viability as a quantum-resistant alternative. 

Moreover, another approach considers fully unstructured sets, leading to the proposal of the DLP analogue in commutative group actions, which appears to be the most promising approach for constructing quantum-resistant cryptographic protocols. This problem, known as the \textit{vectorization problem}, is an instance of constructive membership testing in the orbits of commutative permutation groups on large finite sets. The framework was originally introduced by Couveignes \cite{couveignes2006hard} and has since become a central problem in isogeny-based cryptography, as exemplified by CSIDH \cite{castryck2018csidh}.  

The second strategy relies on increasing the complexity of the algebraic structures underlying the DLP. First, we have the DLP analogue in finite semidirect product groups which first appears in \cite{habeeb2013public}. The problem is also known as the semidirect discrete logarithm problem (SDLP). The SDLP in the semidirect product groups $G\rtimes \End(G)$ for a finite group $G$ can be defined as follows.
\begin{problem}
    Given $g\in G, \sigma\in \End(G),$ and $h=\prod_{i=0}^{t-1}\sigma^i(g)$ for some integer $t$, determine $t$.
\end{problem}

The SDLP can be defined at different levels of generality. In its most general form, the problem considers a finite semigroup $G$ with $\sigma \in \End(G)$. When $G$ is restricted to be a finite group, we obtain the \emph{group-base case}, and when we further require $\sigma$ to be an automorphism, we have the \emph{group case} (or full group case).

The SDLP was believed to resist against Shor's algorithm and thus several post-quantum cryptographic protocols have been proposed based on the formulation of Diffie-Hellman key exchange follows from the SDLP in finite groups. Suppose two parties, Alice and Bob, agree on a public group $G$, an element $g\in G$, and an endomorphism $\sigma \in \End(G)$. Then they can arrive at the same $G-$element as follows.
\begin{itemize}
    \item[1.] Alice picks a random positive integer $x$ and computes $(g, \sigma)^x=\left(A, \sigma^x\right)$. Then, Alice sends $A=\prod_{i=0}^{x-1}\sigma^i(g)$ to Bob.
   \item[2.] Bob also picks a random  positive integer $y$, computes $(g,\sigma)^y=(B,\sigma^y)$ and sends $B=\prod_{i=0}^{y-1}\sigma^i(g)$ to Alice.
   \item[3.] Alice computes its shared key $K_A=A\sigma^x(B)$.
   \item[4.] Bob computes its shared key $K_B=B\sigma^y(A)$.
\end{itemize}
Note that $K_A=K_B$, as the following calculation shows.
\begin{eqnarray*}
\mbox{~~~~~~~~~}A\sigma^x(B)&=&
\prod_{i=0}^{x-1}\sigma^i(g)\prod_{i=0}^{y-1}\sigma^{x+i}(g)
=\prod_{i=0}^{x+y-1}\sigma^i(g)\\
&=&
\prod_{i=0}^{y-1}\sigma^i(g)\prod_{i=0}^{x-1}\sigma^{y+i}(g)\\
&=& B\sigma^y(A).
\end{eqnarray*}

However, it was recently shown in \cite{imran2024efficient,battarbee2024semidirect} that the SDLP in finite groups can be reduced to some instances that can be solved efficiently by Shor's algorithm and well-known generalizations.

More general variant of the DLP is proposed within the framework of general finite non-abelian groups. This was first introduced by Stickel in \cite{stickel2005new}. The DLP analogue in general finite non-abelian groups $G$ is defined as follows.
\begin{problem}
    Given $u,v,w,x\in G$ such that $w=u^axv^b$ for some integer $a$ and $b$, determine $a$ and $b$.
\end{problem}
The problem serves as the basis for Stickel's Diffie-Hellman-like key exchange protocol and has also been adapted in tropical cryptography, notably first proposed by Grigoriev and Shpilrain in \cite{grigoriev2014tropical}.

Since the SDLP in finite groups is efficiently solvable by quantum algorithms, this raises a question whether this problem is even considerably harder than the standard DLP or asymptotically living in the same classical complexity classes. The reduction from group-base to group case is classical and efficient, suggesting that the additional structure of the semidirect product might not provide significant classical hardness benefits. However, unlike the DLP analogue in general non-abelian groups, the SDLP appears quite different from the standard DLP. Therefore, it was not clear how to adopt the standard classical algorithms for DLP to solve the SDLP.

The first result of this work establishes that the SDLP in finite groups can be equivalently expressed in the form of the DLP analogue in general non-abelian groups in Problem 2, which is a form that more naturally generalizes the standard DLP. This reformulation allows us to adapt the \textit{Baby-step Giant-step} algorithm to the SDLP in finite groups, which we implement in SageMath. Finally, we provide a detailed complexity comparison of these algorithms for the standard DLP and the SDLP across several classes of finite groups, including the multiplicative group of finite fields $\F_p^*$, the group of rational points of elliptic curves $E(\F_p)$ over finite fields $\F_p$, and elementary abelian groups $\F_p^n$.

In this work, we focus specifically on the group setting for several reasons. First, classical algorithms like Baby-Step Giant-Step are most naturally formulated for groups where inverses exist. Second, and most importantly, \cite{imran2024efficient} established an efficient classical reduction from the group-base case to the group case, allowing us to restrict our attention to the case where $\sigma$ is an automorphism without loss of generality for group-based instances.

\section{Classical algorithms for the SDLP}\label{sec:classical-algorithms}

In this section, we systematically develop classical algorithmic approaches for solving the semidirect discrete logarithm problem (SDLP). Our exposition follows a deliberate progression through several key components. First, we establish a reduction from the general group-base case SDLP to the group case, demonstrating that we can focus our algorithmic efforts on the latter without loss of generality. Next, we analyze the structural properties of the group case SDLP, particularly examining the cycle structure and period bounds that constrain potential solutions. We then demonstrate a critical reduction from the SDLP to a generalized discrete logarithm problem (GDLP) formulation, which enables us to adapt classical DLP algorithms to the SDLP context. Building on this foundation, we present a modified Baby-step Giant-step algorithm specifically adapted for the SDLP and analyze its complexity in terms of the period of the underlying cycle structure. This comprehensive treatment establishes a framework for comparing the computational difficulty of the SDLP relative to the standard DLP across various group structures, which we explore in subsequent sections.

\subsection{Reduction from group-base case to group case}

The reduction follows directly from \cite{imran2024efficient}. Let $G$ be a finite group with $\sigma$ an endomorphism of $G$. To reduce the problem to the automorphism case, consider $K=\cap_{i=0}^\infty \sigma^i(G)$ and let $k_0$ be the smallest non-negative integer such that $K=\sigma^{k_0}(G)$. It is obvious that $k_0 \leq  \lceil \log|G| \rceil $. So we can just choose $k$ as the upper bound of $\lceil \log|G| \rceil$ and get $K = \sigma^{k}(G)$. Then the restriction of $\sigma$ to $K$ is an automorphism. 

Let $t$ be solution for SDLP$(G,\sigma)$ given by $g$ and $h$, so we get $h = \prod_{i=0}^{t-1}\sigma^i(g)$. Moreover

\[\sigma^k(h) = \sigma^k \left(\prod_{i=0}^{t-1}\sigma^i(g) \right) = \prod_{i=0}^{t-1}\sigma^i \left( \sigma^k (g) \right) .\]

It follows that $t$ also is the solution for SDLP$(K,\sigma |_K)$ given by $\sigma^k(g)$ and $\sigma^k(h)$. This gives a classical polynomial-time reduction from the group-base case SDLP to the group case SDLP. Throughout the rest of this work we will focus only on the group case SDLP.

\subsection{Group case SDLP}

The structure of the semidirect product group provides important insights into the properties of the SDLP. Following the approach in \cite{battarbee2023spdh}, we first introduce some key concepts to understand the cycle structure.

\begin{definition}\label{def:s_function}
Let $G$ be a finite group. Fix some $(g, \sigma) \in G \rtimes \Aut(G)$. For any $x \in \mathbb{Z}$, the function $s_{g,\sigma} : \mathbb{Z} \rightarrow G$ is defined as the group element such that

$$(g, \sigma)^x = (s_{g,\sigma}(x), \sigma^x).$$
\end{definition}

Recall the binary operation in the semidirect product $G\rtimes \Aut(G)$ is defined as $(g_1, \phi_1)(g_2,\phi_2)=(g_1\phi_1(g_2), \phi_1\phi_2)$ for any $(g_1, \phi_1),(g_2,\phi_2)\in G \rtimes \Aut(G)$.
\begin{definition}\label{def:cycle_period}
Let $G$ be a finite group and $(g, \sigma) \in G \rtimes \Aut(G)$. The set $$ \XX_{g,\sigma} := \{s_{g,\sigma}(i) : i \in \mathbb{Z}\}.$$ 
is called the cycle of $(g, \sigma)$, and its size is called the period of $(g, \sigma)$.
\end{definition}

The work of \cite{battarbee2023spdh} gives some important result. Let $r$ be the period of $\XX_{g,\sigma}$. Then we have $\XX_{g,\sigma} = \{1, g, \cdots, s_{g,\sigma}(r-1)\}$. Consequently, the solution of SDLP$(G,\sigma)$ will be unique up to modulo $r$. Also, every $k$ such that $s_{g,\sigma}(k)=1$ we will have $r \mid k$. Let $n$ be of the order $\sigma$. Because $(g, \sigma) \in G \rtimes \langle \sigma \rangle$, we have $s_{g,\sigma}(n|G|) = 1$. This means $r$ is one of the factors of $n|G|$. But we can actually find much lower bound than this value for most case platform groups that we will study.

Define fixed-point subgroup $G^\sigma := \{g \in G \mid \sigma(g)=g \}$. 
We show that $s_{g,\sigma}(n) \in G^\sigma$.

\begin{theorem}\label{thm:sn-in-fixed}
Let $G$ be a finite group and $\sigma$ is automorphism with order $n$. Then $s_{g,\sigma}(n) \in G^\sigma$.
\end{theorem}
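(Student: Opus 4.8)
The plan is to compute $s_{g,\sigma}(n)$ in closed form and then test it directly against $\sigma$. From Definition~\ref{def:s_function} together with the semidirect-product multiplication law $(g_1,\phi_1)(g_2,\phi_2)=(g_1\phi_1(g_2),\phi_1\phi_2)$, an easy induction gives $s_{g,\sigma}(n)=\prod_{i=0}^{n-1}\sigma^i(g)$, i.e. exactly the product defining $h$ in the SDLP. Writing $h:=s_{g,\sigma}(n)$, the goal is to show $\sigma(h)=h$, i.e. $h\in G^\sigma$.

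First I would apply $\sigma$ termwise to this closed form. Since $\sigma$ is a homomorphism, $\sigma(h)=\prod_{i=0}^{n-1}\sigma^{i+1}(g)=\prod_{i=1}^{n}\sigma^i(g)$. The key structural input is that $\sigma$ has order $n$, so $\sigma^n=\Id$ and the last factor $\sigma^n(g)$ collapses to $g$; thus $\sigma(h)$ is built from precisely the same factors $g,\sigma(g),\dots,\sigma^{n-1}(g)$ as $h$, only with the factor $g$ moved from the front to the back. A clean way to record this cyclic shift is to use that $(g,\sigma)$ commutes with its own power $(g,\sigma)^n=(h,\Id)$: expanding $(g,\sigma)(h,\Id)=(h,\Id)(g,\sigma)$ with the multiplication law and comparing first coordinates yields $g\,\sigma(h)=h\,g$, that is, $\sigma(h)=g^{-1}hg$.

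The final step is to pass from this conjugation identity to genuine fixedness, and I expect this to be the \emph{main obstacle}. The computation only shows that $\sigma(h)$ is the conjugate $g^{-1}hg$ of $h$; to conclude $\sigma(h)=h$ one needs $g$ and $h=\prod_{i=0}^{n-1}\sigma^i(g)$ to commute. For all the platform groups studied in this paper, namely $\F_p^*$, $E(\F_p)$, and $\F_p^n$, the group $G$ is abelian, so $g^{-1}hg=h$ holds automatically and $\sigma(h)\in G^\sigma$ follows at once. This is therefore exactly the place where the argument relies on commutativity of the platform, and where a claim for fully general non-abelian $G$ would need the most care; accordingly I would concentrate the effort on justifying that $g$ and $h$ commute, either by invoking abelianness of the relevant platforms or by isolating whatever additional hypothesis is intended to guarantee it.
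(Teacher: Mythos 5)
Your computation follows essentially the same route as the paper's proof --- apply $\sigma$ termwise to $s_{g,\sigma}(n)$ and use $\sigma^n=\Id$ --- but you are more careful than the paper, and the ``main obstacle'' you isolated is a genuine gap in the paper's own argument. The paper writes $\sigma(s_{g,\sigma}(n))=\prod_{i=1}^{n}\sigma^{i}(g)$ and then asserts $\prod_{i=1}^{n}\sigma^{i}(g)=\prod_{i=0}^{n-1}\sigma^{i}(g)$ with no justification: that middle equality replaces the last factor $\sigma^{n}(g)$ by $g$ and then moves it from the back of the product to the front, which is exactly the cyclic shift you refused to perform for free. As your commutation identity $\sigma(h)=g^{-1}hg$ (with $h=s_{g,\sigma}(n)$, derived neatly from $(g,\sigma)$ commuting with $(g,\sigma)^n=(h,\Id)$) makes precise, that shift is legitimate only when $g$ commutes with $h$, e.g.\ when $G$ is abelian.

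Moreover, no further argument can close this gap, because the statement is false for general finite groups. Take $G=S_{3}$, let $\sigma$ be conjugation by $(1\,2)$ (an automorphism of order $n=2$), and let $g=(1\,3)$. Then $\sigma(g)=(2\,3)$, so $h=s_{g,\sigma}(2)=g\,\sigma(g)=(1\,3)(2\,3)$, whereas $\sigma(h)=\sigma(g)\,\sigma^2(g)=(2\,3)(1\,3)$; these are the two distinct $3$-cycles of $S_3$, so $\sigma(h)=g^{-1}hg\neq h$ and $s_{g,\sigma}(2)\notin G^{\sigma}$. So your instinct to concentrate the effort on the commutativity of $g$ and $h$ was exactly right: the theorem needs an abelian hypothesis (or at least the hypothesis that $g$ commutes with $s_{g,\sigma}(n)$), which is how you proved it. Fortunately the damage to the paper is limited: every platform where \Cref{thm:sn-in-fixed} is invoked ($\F_p^*$, $E(\F_p)$, $\F_p^n$) is abelian, and \Cref{thm:sn-equal-1} does not in fact depend on \Cref{thm:sn-in-fixed}, since in its proof $\sigma^{in}=(\sigma^{n})^{i}=\Id$ fixes $s_{g,\sigma}(n)$ trivially, for any finite group.
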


\begin{proof}

Suppose $g' := s_{g,\sigma}(n)$. We have

$$\sigma(g') = \sigma(s_{g,\sigma}(n)) = \prod_{i=1}^n\sigma^i(g) = \prod_{i=0}^{n-1}\sigma^i(g) = s_{g,\sigma}(n) = g'.$$

\end{proof}

Now we are ready to show the main result.

\begin{theorem}\label{thm:sn-equal-1}
Let $G$ be a finite group and $\sigma$ is automorphism with order $n$. Then $s_{g,\sigma}(\ord(s_{g,\sigma}(n)) \cdot n) = 1$.
\end{theorem}

\begin{proof}
Since $\sigma$ has order $n$, we have $\sigma^{in} = \text{id}$ for all $i \in \mathbb{Z}$. By \cite[Theorem 1]{battarbee2023spdh}, we have

$$s_{g,\sigma}(kn) = \prod_{i=0}^{k-1} \sigma^{in}(s_{g,\sigma}(n)) = \prod_{i=0}^{k-1} s_{g,\sigma}(n) = s_{g,\sigma}(n)^k.$$

Setting $k = \ord(s_{g,\sigma}(n))$ gives $s_{g,\sigma}(\ord(s_{g,\sigma}(n)) \cdot n) = s_{g,\sigma}(n)^{\ord(s_{g,\sigma}(n))} = 1$.
\end{proof}

In many of the group cases considered, $\ord(s_{g,\sigma}(n))$ is substantially smaller than $|G|$, resulting in a considerably tighter bound on $r$.

\subsection{Reduction from SDLP to GDLP}\label{equiv}
To efficiently solve the SDLP using classical methods, it is helpful to reformulate the problem in terms of a more general group-theoretic framework. In this subsection, we show that every instance of the group-case SDLP can be equivalently expressed as a GDLP in the semidirect product group $G \rtimes \text{Aut}(G)$, where the group operation incorporates the action of automorphisms.

\begin{theorem}\label{thm:red-sdlp-gdlp}
    Let $G$ be a finite group and $\sigma \in \Aut(G)$. Given $g, h \in G$ such that
\[h = \prod_{i=0}^{t-1} \sigma^i(g) = s_{g,\sigma}(t),\]
then the SDLP instance defined by $(G, \sigma, g, h)$ is equivalent to solving the following GDLP: find $t \in \mathbb{Z}$ such that $w = u^t v^t$ for
\[
u = (g, \sigma), \quad v = (1, \sigma^{-1}), \quad w = (h, \text{id}) \in G \rtimes \text{Aut}(G).
\]
\end{theorem}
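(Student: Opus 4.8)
The plan is to prove the equivalence by a direct computation of $u^t v^t$ inside $G \rtimes \Aut(G)$, using \Cref{def:s_function} to evaluate the powers, and then to observe that the resulting identity runs in both directions so that the SDLP and the GDLP share exactly the same solution set. Concretely, I would show that for \emph{every} integer $t$ the element $u^t v^t$ has the form $(s_{g,\sigma}(t), \text{id})$, after which matching it against $w = (h, \text{id})$ collapses the GDLP condition to the scalar condition $s_{g,\sigma}(t) = h$, which is precisely the SDLP condition.

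First I would compute the two powers separately. By \Cref{def:s_function}, $u^t = (g,\sigma)^t = (s_{g,\sigma}(t), \sigma^t)$, which by hypothesis equals $(h, \sigma^t)$. For $v = (1, \sigma^{-1})$, the first coordinate is trivial, so the multiplication rule gives $(1,\phi_1)(1,\phi_2) = (1\cdot\phi_1(1), \phi_1\phi_2) = (1, \phi_1\phi_2)$, and an easy induction yields $v^t = (1, \sigma^{-t})$. Multiplying and using that an automorphism fixes the identity,
\[
u^t v^t = (s_{g,\sigma}(t), \sigma^t)(1, \sigma^{-t}) = \bigl(s_{g,\sigma}(t)\cdot \sigma^t(1),\ \sigma^t \sigma^{-t}\bigr) = (s_{g,\sigma}(t), \text{id}).
\]

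It then remains to read off the equivalence. Since the second coordinate of $u^t v^t$ is always $\text{id}$, the equation $u^t v^t = w = (h,\text{id})$ holds if and only if $s_{g,\sigma}(t) = h$; this is exactly the defining relation of the SDLP instance $(G,\sigma,g,h)$. Hence the set of integer solutions to the GDLP coincides with the set of SDLP solutions, and in particular they agree modulo the period $r$ of $\XX_{g,\sigma}$, so the two problems are equivalent.

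I do not expect a genuine obstacle here: the statement is essentially a bookkeeping identity rather than a deep fact. The only points requiring mild care are the evaluation of $v^t$ (where the trivial first coordinate must be tracked through the twisted multiplication) and the cancellation $\sigma^t \sigma^{-t} = \text{id}$ of the automorphism components, since it is precisely this cancellation that forces the product into the ``pure group element'' form $(h, \text{id})$ and lets the GDLP recover the SDLP solution directly. One should also state explicitly that the identity is used for all $t \in \mathbb{Z}$ (not just the particular witness), as that is what secures equivalence rather than a one-way implication.
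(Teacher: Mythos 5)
Your proof is correct and follows essentially the same route as the paper's: computing $u^t = (s_{g,\sigma}(t), \sigma^t)$ and $v^t = (1, \sigma^{-t})$, multiplying to get $(s_{g,\sigma}(t), \text{id})$, and matching against $w = (h, \text{id})$. Your explicit remark that the identity holds for all $t \in \mathbb{Z}$ (so the solution sets genuinely coincide) is a slightly more careful statement of the equivalence than the paper gives, but it is the same argument.
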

\begin{proof}
Recall the group operation in the semidirect product $G \rtimes \Aut(G)$ is defined by:
\[(g_1, \phi_1) \cdot (g_2, \phi_2) = (g_1 \cdot \phi_1(g_2), \phi_1 \circ \phi_2).\]
Therefore, we have $u^t = (g, \sigma)^t = (s_{g,\sigma}(t), \sigma^t)$ and  $v^t = (1, \sigma^{-1})^t = (1, \sigma^{-t})$.
Thus,
\[u^t v^t = (s_{g,\sigma}(t), \sigma^t) \cdot (1, \sigma^{-t}) = (s_{g,\sigma}(t), \text{id}) = (h, \text{id}) = w.\]
Therefore, solving SDLP is equivalent to finding $t$ such that $u^t v^t = w$ in the group $G \rtimes \Aut(G)$.
\end{proof}
This reduction is significant for two main reasons:
\begin{enumerate}
    \item Algorithmic leverage: It allows us to view the SDLP through the lens of group exponentiation in a non-abelian setting, opening the door to adapting known discrete logarithm algorithms (e.g., Baby-Step Giant-Step, Pollard's rho) to the semidirect product framework.
    \item Structural clarity: The decomposition of SDLP into GDLP isolates the nonlinear part of the problem (the recursive application of automorphisms) into a group-theoretic form that is more amenable to collision-based methods.
\end{enumerate}

Moreover, this reformulation highlights the similarity between SDLP and the Stickel-type discrete log problems in non-abelian groups, as introduced in \cite{stickel2005new}. In both cases, the logarithmic exponent is recovered not through simple group exponentiation, but through the interaction of two operations—here captured by $u^t$ and $v^t$ respectively—that encode algebraic structure beyond commutativity.

This equivalence sets the stage for \Cref{sec:bsgs-sdlp}, where we present a concrete adaptation of the Baby-Step Giant-Step algorithm to solve this GDLP efficiently. The insight from \Cref{thm:red-sdlp-gdlp} is key to this adaptation, as it ensures that every SDLP instance can be encoded as a GDLP in a unified and structurally sound manner.

\subsection{Baby-step giant-step adaptation}\label{sec:bsgs-sdlp}
The Baby-step Giant-step (BSGS) algorithm, originally introduced by Shanks \cite{Shanks1971ClassNA}, is a classical collision-based method for solving the standard DLP in cyclic groups. In this section, we present a generalized adaptation of the BSGS algorithm designed to solve the SDLP, using the reduction to the GDLP established in \Cref{thm:red-sdlp-gdlp}.

Let $u = (g, \sigma)$, $v = (1, \sigma^{-1})$, and $w = (h, \text{id}) \in G \rtimes \text{Aut}(G)$, where $g, h \in G$, and $\sigma \in \text{Aut}(G)$. The SDLP then reduces to finding $t \in \mathbb{Z}$ such that $w = u^tv^t.$

Let $r$ be the period of the cycle $X_{g,\sigma}$, which bounds the search space for $t$. Set $m = \lceil \sqrt{r} \rceil$. Then, any $t < r$ can be uniquely expressed in the form $t = im + j$ for integers $0 \leq i, j < m$.

We now derive the core identity exploited by the adapted BSGS algorithm. From the group structure of $G \rtimes \text{Aut}(G)$, we have:

\[
w = u^t v^t = u^{im + j} v^{im + j} = u^{im} u^{j} v^{j} v^{im},
\]

which can be rearranged as:

\[
u^{j} v^{j} = u^{-im} w v^{-im}.
\]

This identity implies that a collision between the two sides over their respective ranges yields a solution for $t$. Thus, we construct two sets
\begin{enumerate}
    \item Baby steps: $\{ u^j v^j : 0 \leq j < m \}$.
    \item Giant steps: $\{ u^{-im} w v^{-im} : 0 \leq i < m \}$
\end{enumerate}
A match between an element from each set yields
\[u^j v^j = u^{-im} w v^{-im} \implies t = im + j.\]

\begin{algorithm}[H]
\caption{BSGS for GDLP}\label{alg:bsgs-sdlp}
\begin{algorithmic}[1]
\Require $u,v,w \in G \rtimes \Aut(G)$ and $r$ is period of $\XX_{g,\sigma}$
\Ensure $t$ such that $u^tv^t=w$
\State Initialize empty hash table $\mathcal{T}$
\State $m \gets \lceil \sqrt{r} \rceil$
\State $temp \gets (1, \text{id})$
\For{$j = 0$ \textbf{to} $m-1$}
    \State $\mathcal{T}[temp] \gets j$ \Comment{Baby Step}
    \State $temp \gets u \cdot temp \cdot v$
\EndFor
\State Precompute $u^{-m}, v^{-m}$
\State $temp \gets w$
\For{$i = 0$ \textbf{to} $m-1$}
    \If{$temp \in \mathcal{T}$} \Comment{Giant Step}
        \State $j \gets \mathcal{T}[temp]$
        \State \Return $i \cdot m + j$
    \Else
        \State $temp \gets u^{-m} \cdot temp \cdot v^{-m}$
    \EndIf
\EndFor
\end{algorithmic}
\end{algorithm}

\begin{theorem}
    Let $u,v,w \in G\rtimes \Aut(G)$ be such that $u^tv^t=w$ and $r$ is period of $\XX_{g,\sigma}$. Assuming $u^{-1}$ and $v^{-1}$ are efficiently computable, we can find $t$ with runtime $O(\sqrt{r})$ and space $O(\sqrt{r})$ using Algorithm~\ref{alg:bsgs-sdlp}.
\end{theorem}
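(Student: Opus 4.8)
The plan is to verify correctness and complexity of Algorithm~\ref{alg:bsgs-sdlp} separately. For \textbf{correctness}, I would first confirm that the loop invariant of the baby-step loop is maintained: after the iteration indexed by $j$ stores a value into $\mathcal{T}$, the variable $temp$ equals $u^{j}v^{j}$. This follows by induction, starting from $temp=(1,\mathrm{id})=u^0v^0$ and using the update $temp \gets u\cdot temp\cdot v$; the key point requiring care is that $u\cdot(u^{j}v^{j})\cdot v = u^{j+1}v^{j+1}$, which is exactly the rearrangement $u^{im+j}v^{im+j}=u^{j}v^{j}\cdot u^{im}v^{im}$ specialized to consecutive exponents and uses that left-multiplication by $u$ and right-multiplication by $v$ commute as operations on the stored element. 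Thus the table $\mathcal{T}$ contains precisely the baby-step set $\{u^{j}v^{j}:0\le j<m\}$ with the correct index.

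Next, for the giant-step loop I would establish the analogous invariant that at the start of iteration $i$ the variable $temp$ equals $u^{-im}\,w\,v^{-im}$: initially ($i=0$) $temp=w=u^{0}wv^{0}$, and the update $temp\gets u^{-m}\cdot temp\cdot v^{-m}$ advances the exponent correctly by the same left/right commuting-action argument. A collision $temp\in\mathcal{T}$ with stored index $j$ then gives $u^{-im}wv^{-im}=u^{j}v^{j}$, which upon left-multiplying by $u^{im}$ and right-multiplying by $v^{im}$ yields $w=u^{im+j}v^{im+j}$, so $t=im+j$ is a valid solution. To guarantee that such a collision is actually found, I would invoke that the true solution $t$ is unique modulo $r$ (established earlier via the cycle structure of $\XX_{g,\sigma}$), so we may assume $0\le t<r$; writing $t=im+j$ with $0\le i,j<m=\lceil\sqrt{r}\rceil$ shows the matching pair $(i,j)$ lies within the searched ranges, hence the algorithm returns before exhausting the giant-step loop.

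For \textbf{complexity}, the argument is a routine count: the baby-step loop runs $m=\lceil\sqrt{r}\rceil$ iterations, each performing a constant number of group multiplications in $G\rtimes\Aut(G)$ and one hash-table insertion, giving $O(\sqrt{r})$ time; storing $m$ elements gives $O(\sqrt{r})$ space. Precomputing $u^{-m}$ and $v^{-m}$ costs $O(\log m)$ multiplications by fast exponentiation, which is dominated by the loop cost. The giant-step loop likewise runs at most $m$ iterations with constant work (including an $O(1)$ expected-time hash lookup) per iteration, contributing a further $O(\sqrt{r})$ time. Summing yields the claimed $O(\sqrt{r})$ time and $O(\sqrt{r})$ space.

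I expect the main subtlety, rather than a genuine obstacle, to be making precise the two loop invariants and the commuting-action observation that left-multiplication by powers of $u$ and right-multiplication by powers of $v$ act independently on the stored element; once this is stated cleanly the collision-to-solution implication and the range-coverage argument follow immediately. A secondary point worth flagging is that the group operations are assumed to cost $O(1)$, so the stated bounds are in terms of group operations; if one wishes to account for the bit-complexity of composing automorphisms this would introduce an additional polynomial factor, but under the model used throughout the paper the $O(\sqrt{r})$ bounds hold as stated.
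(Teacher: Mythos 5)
Your proof is correct and takes essentially the same approach as the paper: the same collision identity $u^jv^j = u^{-im}\,w\,v^{-im}$, the same partition $t=im+j$ with $m=\lceil\sqrt{r}\rceil$ covering $0\le t<r$, and the same operation count for time and space. One point in your favor: your loop-invariant justification via associativity (left-multiplication by $u$ and right-multiplication by $v$ commuting as operators) is the sound derivation, whereas the paper's expository rearrangement $u^{im+j}v^{im+j}=u^{j}v^{j}\cdot u^{im}v^{im}$ preceding the theorem is not valid in a non-abelian group; the identity that actually matters, $w=u^{im}\,(u^{j}v^{j})\,v^{im}$, is the one both you and the paper's proof ultimately rely on.
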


\begin{proof}
    The algorithm correctness follows from the equality $u^{j} v^{j} = u^{-im} w v^{-im}$ when $t = im + j$. Since we know that $0 \leq t < r$, and we have partitioned this range with $0 \leq j < m$ and $0 \leq i < m$ where $m = \lceil \sqrt{r} \rceil$, we will find a solution if one exists. The algorithm achieves $O(\sqrt{r})$ time complexity and $O(\sqrt{r})$ space complexity because it performs exactly $m$ group operations in the baby step phase while storing these values in a hash table, followed by at most $m$ group operations in the giant step phase with constant-time hash table lookups.
\end{proof}

When applying this algorithm to SDLP instances via the reduction in Theorem~\ref{thm:red-sdlp-gdlp}, we have $u = (g, \sigma)$, $v = (1, \sigma^{-1})$, and $w = (h, \text{id})$. In this case, computing $u^{-1} = (\sigma^{-1}(g^{-1}), \sigma^{-1})$ and $v^{-1} = (1, \sigma)$ requires efficient computation of $\sigma^{-1}$. For the platform groups studied in Section~\ref{sec:finite-field}, \ref{sec:elliptic-curve}, and \ref{sec:abelian-group}, this requirement is satisfied as we discuss in the respective subsections.

This algorithm forms the computational foundation for our experimental analysis in later sections. Its performance, as we will see, depends heavily on the structure of the underlying platform group and the properties of the automorphism $\sigma$. The period $r$ plays a crucial role in determining the efficiency of solving SDLP via this method.

\section{Specific platform groups}

In this section, we analyze the computational complexity of both the discrete logarithm problem (DLP) and the SDLP across several fundamental platform groups that are commonly employed in cryptographic applications. Our investigation focuses on three platform groups: the multiplicative group of finite fields $\mathbb{F}_p^*$, elliptic curves over finite fields $E(\mathbb{F}_p)$, and elementary abelian groups $\mathbb{F}_p^n$.

As established in \Cref{sec:classical-algorithms}, we focus specifically on the group case of the SDLP where $\sigma$ is an automorphism, leveraging the polynomial-time reduction from the group-base case. For each platform, we derive complexity bounds for both problems, examining how the inherent algebraic structure influences the computational difficulty. Our analysis reveals that the relative hardness between DLP and SDLP varies dramatically depending on the underlying group structure, with some platforms favoring one problem over the other.

The results of our analysis are summarized in \Cref{tab:complexity-comparison}, which provides a comprehensive comparison of asymptotic complexities for the group case SDLP.

\begin{table}[h]
\centering
\caption{Complexity comparison of DLP and SDLP across different platform groups}
\label{tab:complexity-comparison}
\begin{tabular}{|c|c|c|}
\hline
\textbf{Platform Group} & \textbf{DLP} & \textbf{SDLP (group case)} \\
\hline
$\mathbb{F}_p^*$ & $e^{((64/9)^\frac{1}{3} + o(1))(\ln p)^{\frac{1}{3}}(\ln \ln p)^{\frac{2}{3}}}$ & $\mathcal{O}(p)$\footnotemark \\
\hline
$E(\mathbb{F}_p)$ & $\mathcal{O}(\sqrt{p})$ & $\mathcal{O}(1)$ \\
\hline
\multirow{2}{*}{$\mathbb{F}_p^n$} & \multirow{2}{*}{$\mathcal{O}(1)$} & $\mathcal{O}(p^{n/2})$ (no eigenvalue 1) \\
\cline{3-3}
 & & $\mathcal{O}(p^{\frac{n}{2} + 1})$ (with eigenvalue 1) \\
\hline
\end{tabular}
\end{table}
\footnotetext{Precise complexity: $\mathcal{O}(\sqrt{(p-1) \cdot \gcd(k-1, p-1)})$ for automorphism $x \mapsto x^k$. For random $k \in \mathbb{Z}_{p-1}^*$, $\gcd(k-1, p-1)$ is typically small on average.}

These findings reveal several interesting patterns. The relationship between DLP and SDLP complexity is highly platform-dependent: while both problems exhibit comparable difficulty in finite fields $\mathbb{F}_p^*$, the SDLP becomes trivial in elliptic curves but can be harder in elementary abelian groups.

\subsection{Multiplicative group of finite field}\label{sec:finite-field}

Let $\mathbb{F}_p$ be a finite field of prime order $p$. The multiplicative group $\mathbb{F}_p^*$ is cyclic of order $p-1$, which completely determines its automorphism structure. Since any automorphism of a cyclic group is uniquely determined by where it sends a generator, and must map the generator to another generator to maintain bijectivity, we have $\Aut(\mathbb{F}_p^*) \cong \mathbb{Z}_{p-1}^*$. 

Specifically, each automorphism of $\mathbb{F}_p^*$ has the form $\phi_k: x \mapsto x^k$ where $k \in \mathbb{Z}_{p-1}^*$ (i.e., $\gcd(k, p-1) = 1$). The requirement that $\gcd(k, p-1) = 1$ ensures that $\phi_k$ is bijective: if $g$ is a generator of $\mathbb{F}_p^*$, then $g^k$ is also a generator if and only if $k$ is coprime to $p-1$. This characterization of automorphisms will be crucial for analyzing the SDLP complexity in this group.

For computational purposes, we note that computing $\sigma^{-1}$ is efficient in this setting. Given an automorphism $\phi_k: x \mapsto x^k$ where $k \in \mathbb{Z}_{p-1}^*$, the inverse automorphism is $\phi_{k^{-1}}: x \mapsto x^{k^{-1} \bmod (p-1)}$. The modular inverse $k^{-1} \bmod (p-1)$ can be computed in $O(\log p)$ time using the Extended Euclidean Algorithm, making $\sigma^{-1}$ efficiently computable as required by Algorithm~\ref{alg:bsgs-sdlp}.

Now we are ready to compare the standard DLP and the SDLP for the multiplicative group of finite fields $\mathbb{F}_p^*$. 
\subsubsection{DLP}
For the multiplicative group of finite fields $\mathbb{F}_p^*$, there are several classical algorithms available for solving the DLP. The generic approaches include BSGS \cite{Shanks1971ClassNA} and Pollard's rho algorithm \cite{pollard78dl}, both with time complexity $\mathcal{O}(\sqrt{p-1}) = \mathcal{O}(\sqrt{p})$.

However, the best classical algorithm for solving the DLP is the Number Field Sieve (NFS), which was originally developed for integer factorization \cite{lenstra90nfs,buhler93nfs} and later adapted to the discrete logarithm problem \cite{gordon93nfs}.

For a prime $p$, the NFS algorithm achieves a subexponential time complexity of $e^{((64/9)^\frac{1}{3} + o(1))(\ln p)^{\frac{1}{3}}(\ln \ln p)^{\frac{2}{3}}}$. The algorithm works by constructing number fields and using relations between elements to build a system of linear equations in a large but sparse matrix. The solution to this system gives the discrete logarithm.

\subsubsection{SDLP}
The semidirect product $\mathbb{F}_p^* \rtimes \Aut(\mathbb{F}_p^*)$ has elements of the form $(a, \phi_k)$ where $a \in \mathbb{F}_p^*$ and $\phi_k \in \Aut(\mathbb{F}_p^*)$. The group operation is defined as:
\[(a, \phi_k) \cdot (b, \phi_l) = (a \cdot \phi_k(b), \phi_k \circ \phi_l) = (a \cdot b^k, \phi_{kl \bmod p-1}).\]
The identity element is $(1, \phi_1)$, and the inverse of $(a, \phi_k)$ is $(\phi_{k^{-1}}(a^{-1}), \phi_{k^{-1}})$ where $k^{-1}$ is computed modulo $p-1$.

To analyze the SDLP complexity, we apply \Cref{thm:sn-in-fixed} and \Cref{thm:sn-equal-1}. Let $n = \text{ord}(k)$ as an element in $\mathbb{Z}_{p-1}^*$. By \Cref{thm:sn-in-fixed}, we have $a' := s_{a,\phi_k}(n) \in (\mathbb{F}_p^*)^{\phi_k}$, the fixed-point subgroup under $\phi_k$. We need to compute the size of this subgroup:
\[(\mathbb{F}_p^*)^{\phi_k} = \{x \in \mathbb{F}_p^* \mid \phi_k(x) = x\} = \{x \in \mathbb{F}_p^* \mid x^k = x\} = \{x \in \mathbb{F}_p^* \mid x^{k-1} = 1\}.\]

The size of this subgroup is precisely $|(\mathbb{F}_p^*)^{\phi_k}| = \gcd(k-1, p-1)$, since these are exactly the elements whose order divides $k-1$ in the cyclic group $\mathbb{F}_p^*$. Therefore, $\text{ord}(a') \leq \gcd(k-1, p-1)$. By \Cref{thm:sn-equal-1}, the period $r$ of the cycle $\mathcal{X}_{a,\phi_k}$ divides $\text{ord}(a') \cdot n$, giving us the bound:
\[r \leq \text{ord}(a') \cdot n \leq \gcd(k-1, p-1) \cdot \phi(p-1).\]

In the worst case, when $\gcd(k-1, p-1)$ is maximized, we have $r = \mathcal{O}(p \cdot \phi(p-1)) = \mathcal{O}(p^2)$. Applying \Cref{alg:bsgs-sdlp}, the time complexity of SDLP becomes $\mathcal{O}(\sqrt{p^2}) = \mathcal{O}(p)$.

\begin{remark}
It is worth noting that SDLP in $\mathbb{F}_p^*$ generally admits more efficient algorithms using \Cref{alg:bsgs-sdlp} if $k$ chosen randomly because value of $\gcd(k-1, p-1)$ will be small on average the value of common divisor of random number $k-1$ and $p-1$. In practice as later will shown on \Cref{sec:simulation}, we will have time complexity $\mathcal{O} \left(\sqrt{\phi(p-1)} \right) = \mathcal{O}(\sqrt{p})$ as compared to $\mathcal{O}(p)$. 
\end{remark}

\subsection{Elliptic curve}\label{sec:elliptic-curve}

Let $E$ be an elliptic curve over the finite field $\mathbb{F}_p$ where $p > 3$. The curve can be given by the Weierstrass equation $y^2 = x^3 + ax + b$ with $a, b \in \mathbb{F}_p$ and $4a^3 + 27b^2 \neq 0$. The set $E(\mathbb{F}_p)$ of $\mathbb{F}_p$-rational points forms an abelian group under the chord-and-tangent law.

The endomorphism ring $\End(E)$ consists of all group homomorphisms from $E$ to itself that are defined over $\overline{\mathbb{F}}_p$. The automorphism group $\Aut(E)$ comprises all invertible elements in $\End(E)$. A fundamental result states that $|\Aut(E)|$ divides 24, with the exact order determined by the $j$-invariant of $E$ (see \cite{silverman2009arithmetic}):

\begin{center}
\begin{tabular}{ |c|c|c| } 
 \hline
 $|\Aut(E)|$ & $j(E)$ & $\Char(\mathbb{F}_p)$ \\
 \hline
 2 & $j(E)\neq 0, 1728$ & any \\ 
 4 & $j(E)=1728$ & $p \neq 2, 3$ \\ 
 6 & $j(E)=0$ & $p \neq 2, 3$ \\ 
 12 & $j(E)=0$ or $1728$ & $p = 3$\\ 
 24 & $j(E)=0$ or $1728$ & $p = 2$ \\ 
 \hline
\end{tabular}
\end{center}

This bounded size of $\Aut(E)$ is crucial for our analysis of the SDLP complexity on elliptic curves. Additionally, the automorphisms of elliptic curves have explicit descriptions and bounded order (at most 24), making their inverses readily computable.

\subsubsection{DLP}

Unlike the DLP in multiplicative groups of finite fields, the best known classical algorithms for solving the (Elliptic Curve Discrete Log Problem) ECDLP are generic algorithms such as Baby-Step Giant-Step \cite{Shanks1971ClassNA} and Pollard's rho algorithms \cite{pollard78dl}, both with time complexity $\mathcal{O}(\sqrt{\#E(\mathbb{F}_p)}) = \mathcal{O}(\sqrt{p})$ (By Hasse Bound). This suggests that the ECDLP is genuinely harder than the DLP in finite fields of comparable size, which is a key reason for the popularity of elliptic curve cryptography.

\subsubsection{SDLP}
The semidirect product $E(\mathbb{F}_p) \rtimes \Aut(E(\mathbb{F}_p))$ has elements of the form $(P, \alpha)$ where $P \in E(\mathbb{F}_p)$ and $\alpha \in \Aut(E)$. The group operation in this semidirect product is defined as:
\[(P, \alpha) \cdot (Q, \beta) = (P + \alpha(Q), \alpha \circ \beta).\]
The identity element is $(\mathcal{O}, [1])$. The inverse of $(P, \alpha)$ is $(-\alpha^{-1}(P), \alpha^{-1})$.

Instead using \Cref{thm:sn-equal-1}, we can show that the bound $r$ is constant using the fact that $\alpha$ is element of endomorphism ring $\End(E)$.

\begin{theorem}\label{thm:ec-sn}
Let $P \in E(\mathbb{F}_p)$ and $\alpha \in \Aut(E)$ is non-trivial automorphism with order $n$, then $s_{P, \alpha}(n) = \mathcal{O}$
\end{theorem}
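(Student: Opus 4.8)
The plan is to recognize $s_{P,\alpha}(n)$ as the image of $P$ under a single endomorphism assembled from $\alpha$, and then to exploit the fact that $\End(E)$ is an integral domain. Writing the cycle function in the additive notation of $E$, I would first observe that
\[
s_{P,\alpha}(n) = \sum_{i=0}^{n-1} \alpha^i(P) = \Phi(P), \qquad \Phi := \sum_{i=0}^{n-1} \alpha^i \in \End(E),
\]
where $\Phi$ is a genuine endomorphism because $\End(E)$ is closed under pointwise addition and under composition.

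The key algebraic step is the telescoping identity $(\alpha - [1])\,\Phi = \alpha^n - [1]$, where $[1]$ denotes the identity endomorphism. Since $\alpha$ has order $n$ in $\Aut(E)$, we have $\alpha^n = [1]$, so the right-hand side is the zero endomorphism and therefore $(\alpha - [1])\,\Phi = 0$ in $\End(E)$.

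I would then invoke the structural fact that $\End(E)$ has no zero divisors: the degree map $\deg \colon \End(E) \to \mathbb{Z}_{\ge 0}$ is multiplicative and vanishes only on the zero endomorphism, so a product of two endomorphisms can be zero only if one of the factors is already zero. Because $\alpha$ is non-trivial we have $\alpha - [1] \neq 0$, and hence $\Phi = 0$ as an endomorphism. Evaluating at $P$ gives $s_{P,\alpha}(n) = \Phi(P) = \Ort$, as claimed.

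The main obstacle -- indeed the only non-routine ingredient -- is justifying that $\End(E)$ is a domain; everything else is bookkeeping in the semidirect product. This is a standard consequence of the multiplicativity of the isogeny degree, but it is precisely the structural input that fails for an arbitrary finite group $G$ (where $\sum_{i=0}^{n-1}\sigma^i$ need not annihilate $G$), and it is what ultimately collapses the SDLP to a trivial problem on elliptic curves.
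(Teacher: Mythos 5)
Your proof follows essentially the same route as the paper's: write $s_{P,\alpha}(n) = \Phi(P)$ with $\Phi = [1] + \alpha + \cdots + \alpha^{n-1} \in \End(E)$, factor $(\alpha - [1])\Phi = \alpha^n - [1] = [0]$, and cancel the nonzero factor $\alpha - [1]$ using the non-triviality of $\alpha$. The only difference is that you explicitly justify this cancellation via multiplicativity of the degree map (i.e.\ $\End(E)$ has no zero divisors), a structural fact the paper's proof invokes only implicitly, so your write-up is, if anything, the more complete of the two.
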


\begin{proof}
We have that 
\[s_{P, \alpha}(n) = P + \alpha(P) + \cdots + \alpha^{n-1}(P) = ([1] + \alpha + \cdots + \alpha^{n-1})(P).\]
Also notice that
\begin{align*}
\alpha^n &= [1] \\
\alpha^n - [1] &= [0] \\
(\alpha - [1])(\alpha^{n-1} + \cdots + \alpha + [1]) &= [0] \\
\Longrightarrow \alpha^{n-1} + \cdots + \alpha + [1] &= [0]
\end{align*}

By assumption $\alpha$ is non-trivial automorphism. Finally, we have
$([1] + \alpha + \cdots + \alpha^{n-1})(P) = [0]P = \mathcal{O}$.
\end{proof}

As a result, by \Cref{thm:ec-sn} we have the period $r$ of the cycle $\XX_{P,\alpha}$ divide $n$ which is at most 24. So the complexity for solving SDLP is $\mathcal{O}(1)$.

\subsection{Elementary abelian group}\label{sec:abelian-group}

An elementary abelian $p$-group of rank $n$ is a group isomorphic to $\mathbb{F}_p^n$, where every non-identity element has order $p$. As $\mathbb{F}_p^n$ is an $n$-dimensional vector space over $\mathbb{F}_p$, its automorphism group is isomorphic to $\GL_n(\mathbb{F}_p)$, the general linear group of $n \times n$ invertible matrices over $\mathbb{F}_p$. Under this isomorphism, each automorphism $\sigma \in \Aut(\mathbb{F}_p^n)$ corresponds to a matrix $A_\sigma \in \GL_n(\mathbb{F}_p)$ such that $\sigma(v) = A_\sigma v$ for all $v \in \mathbb{F}_p^n$. A well-known result states that the maximum order of an element in 
$\GL_n(\mathbb{F}_p)$ is $p^n - 1$ \cite{darafsheh2005}, achieved by 
companion matrices of primitive polynomials of degree $n$ over $\mathbb{F}_p$.

Under this isomorphism, computing $\sigma^{-1}$ for an automorphism $\sigma$ corresponds to computing the inverse of the matrix $A_\sigma \in \GL_n(\mathbb{F}_p)$. Matrix inversion can be performed in $O(n^3)$ operations over $\mathbb{F}_p$ using Gaussian elimination, which is polynomial time. Therefore, $\sigma^{-1}$ is efficiently computable as required by Algorithm~\ref{alg:bsgs-sdlp}.

\subsubsection{DLP}

For elementary abelian groups, the discrete logarithm problem takes a particularly simple form. Given $v \in \mathbb{F}_p^n$ and $tv \in \mathbb{F}_p^n$ for some scalar $t$, we need to determine $t$. Unlike the situation in general groups, where no direct formula exists, in a vector space structure like $\mathbb{F}_p^n$, we can solve this problem directly.

Specifically, since $v$ and $tv$ are both vectors, we can select any non-zero component of $v$, say $v_i \neq 0$, and compute $t = (tv)_i \cdot (v_i)^{-1} \mod p$. This direct computation can be performed in constant time, making the complexity $\mathcal{O}(1)$. This is substantially more efficient than the DLP in other group structures such as $\mathbb{F}_p^*$ or $E(\mathbb{F}_p)$.

\subsubsection{SDLP}

The semidirect product $\mathbb{F}_p^n \rtimes \Aut(\mathbb{F}_p^n)$ has elements of the form $(v, A)$ where $v \in \mathbb{F}_p^n$ and $A \in \GL_n(\mathbb{F}_p)$, with the group operation:
\[(v_1, A_1) \cdot (v_2, A_2) = (v_1 + A_1v_2, A_1A_2).\]

The identity element is $(0, I)$ where $I$ is the identity matrix in $\GL_n(\mathbb{F}_p)$, and the inverse of $(v, A)$ is $(-A^{-1}v, A^{-1})$.

For the semidirect discrete logarithm problem in elementary abelian groups, we need to analyze how the complexity depends on the properties of the automorphism matrix. As established earlier, for a matrix $A \in \GL_n(\mathbb{F}_p)$ the maximum order is $p^n - 1$.

The complexity of SDLP varies based on whether the automorphism matrix $A$ has an eigenvalue of 1. When $A$ has no eigenvalue equal to 1, the fixed-point subgroup $(\mathbb{F}_p^n)^A = \{v \in \mathbb{F}_p^n \mid Av = v\}$ contains only the zero vector, since $Av = v$ if and only if $v = 0$. Let $\ell = \text{ord}(A)$. By \Cref{thm:sn-in-fixed}, we have $s_{v,A}(\ell) \in (\mathbb{F}_p^n)^A$. Since the only element in this fixed-point subgroup is the zero vector, we conclude that $s_{v,A}(\ell) = 0$. By \Cref{thm:sn-equal-1}, this implies that the period $r$ of the cycle $\mathcal{X}_{v,A}$ divides $\text{ord}(s_{v,A}(\ell)) \cdot \ell = \text{ord}(0) \cdot \ell = \ell$, which is at most $p^n - 1$. Using \Cref{alg:bsgs-sdlp}, we achieve a complexity of $\mathcal{O}(\sqrt{r}) = \mathcal{O}(\sqrt{p^n - 1}) = \mathcal{O}(p^{n/2})$.

Otherwise, when $A$ has an eigenvalue of 1, the fixed-point subgroup $(\mathbb{F}_p^n)^A$ is non-trivial. With high probability $s_{v,A}(\ell) \ne 0$. But since every non-zero element in $\mathbb{F}_p^n$ has order $p$, by \Cref{thm:sn-equal-1} we have $s_{v,A}(p\ell) = 0$. The worst-case bound for the period $r$ becomes $p \cdot \ord(A) \le p \cdot (p^n - 1)$. Using \Cref{alg:bsgs-sdlp}, the complexity increases to $\mathcal{O}(\sqrt{p \cdot (p^n - 1)}) = \mathcal{O}(p^{\frac{n}{2} + 1})$.

\section{Simulation results}\label{sec:simulation}
To validate our theoretical analysis and provide practical insights into the relative hardness of the DLP and SDLP across different platform groups, we implemented the algorithms in SageMath and conducted systematic experiments. Our implementation is publicly available at \url{https://github.com/swusjask/classical-sdlp}. We used the Baby-Step Giant-Step algorithm for all test cases to ensure a fair comparison of the inherent structural properties of each group, even though more efficient algorithms exist for specific platforms (such as the Number Field Sieve for finite fields).

Our experiments measured both the theoretical bounds (group order for DLP, cycle period for SDLP) and the actual computation time required by the BSGS algorithm. For finite fields and elliptic curves, we tested primes ranging from $2^{30}$ to $2^{37}$ to cover cryptographically relevant sizes. For elementary abelian groups, we fixed the dimension $n=3$ and varied the prime from $2^6$ to $2^{12}$, specifically examining the impact of eigenvalue structure on SDLP complexity.

\subsection{Finite Fields and Elliptic Curves}
We also compared the DLP and SDLP across both finite fields $\mathbb{F}_p^*$ and elliptic curves $E(\mathbb{F}_p)$ for primes $p$ ranging from $2^{30}$ to $2^{37}$, focusing on cryptographically relevant sizes. \Cref{fig:field-curve} shows both the comparison of group orders/cycle periods and their corresponding computation times.

\begin{figure}[h]
\centering
\includegraphics[width=\textwidth]{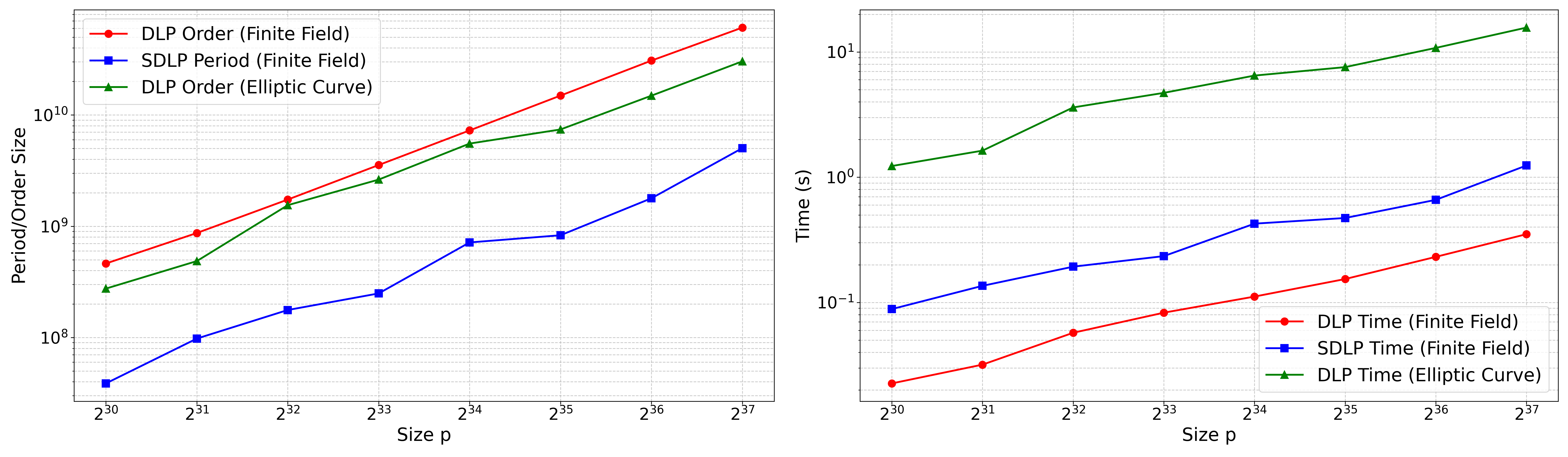}
\caption{Comparison of order/period sizes and computation times for DLP and SDLP in finite fields and elliptic curves.}
\label{fig:field-curve}
\end{figure}

Several interesting observations emerge from this experiment. The orders for both DLP in finite fields ($p-1$) and DLP in elliptic curves (approximately $p$ by Hasse's Theorem) grow similarly, as expected. However, the period for SDLP in finite fields is consistently smaller than the group order, often by a significant factor. This is because the period is bounded by $\phi(p-1)$ multiplied by some constant.

Despite this smaller period size, the SDLP in finite fields consistently requires more computation time than the standard DLP, demonstrating that the additional complexity of group operations in the semidirect product outweighs the advantage of a smaller search space. The DLP in elliptic curves shows the highest computation time despite having a similar order to the finite field case, reflecting the more complex nature of elliptic curve group operations.

For elliptic curves, we have excluded the SDLP from these charts because, as demonstrated in \Cref{sec:elliptic-curve}, the cycle period is bounded by the order of the automorphism, which is at most 24, making the SDLP trivially solvable in constant time.

\subsection{Elementary Abelian Groups}
For elementary abelian groups, we fixed the dimension $n=3$ and varied the prime $p$ from $2^6$ to $2^{12}$. We chose to fix $n=3$ because the problem size grows exponentially with increasing dimension, making experiments with larger dimensions computationally expensive. We specifically examined the impact of the eigenvalue structure of the automorphism matrix on the SDLP complexity. \Cref{fig:elementary-abelian} shows both the cycle period lengths and computation times, comparing matrices with and without eigenvalue 1.

\begin{figure}[h]
\centering
\includegraphics[width=\textwidth]{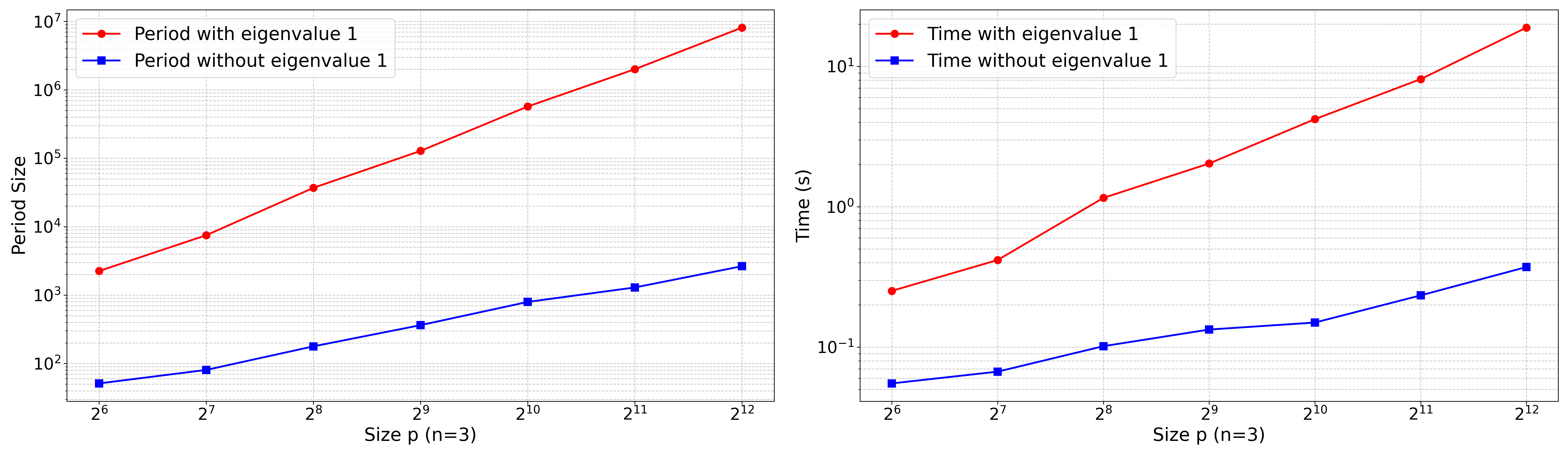}
\caption{Period sizes and computation times for BSGS algorithm on elementary abelian groups ($n=3$) with varying $p$, comparing matrices with and without eigenvalue 1.}
\label{fig:elementary-abelian}
\end{figure}

The experimental results strongly support our theoretical analysis from \Cref{sec:abelian-group}. Our theory predicts that matrices with eigenvalue 1 have larger periods than those without: specifically, the period can reach $p \cdot \ord(A)$ when eigenvalue 1 is present, compared to at most $\ord(A)$ when it is absent. The experiments confirm this distinction - for matrices without eigenvalue 1, the period remains bounded by the order of the matrix, which is at most $p^n-1$. In contrast, matrices with eigenvalue 1 exhibit periods that can be larger by a factor of $p$, reaching close to $p \cdot (p^n-1)$ in many cases. This validates our theoretical complexity bounds of $\mathcal{O}(p^{\frac{n}{2}+1})$ for matrices with eigenvalue 1 versus $\mathcal{O}(p^{\frac{n}{2}})$ for those without. This difference in period length directly translates to computational time, as shown in the right graph of \Cref{fig:elementary-abelian}. The gap between the two cases widens as $p$ increases, with matrices having eigenvalue 1 requiring significantly more computation time.

\section{Conclusion}
In this work, we investigated the classical hardness of the semidirect discrete logarithm problem in finite groups, a problem proposed as a foundation for post-quantum cryptographic protocols. Our investigation reveals that the SDLP does not offer uniform classical hardness advantages over the standard DLP, with the relative difficulty being highly dependent on the underlying platform group.

Our main findings are:
\begin{enumerate}
    \item \textbf{Platform-Dependent Hardness}: The classical complexity of SDLP varies dramatically across different group structures:
    \begin{itemize}
        \item In finite fields $\mathbb{F}_p^\ast$, SDLP has worse worst-case complexity than DLP ($O(p)$ vs $O(\sqrt{p})$). However, in practice with random automorphisms, SDLP instances often have significantly smaller periods due to small values of $\gcd(k-1, p-1)$, making them easier to solve despite the worse theoretical bound.

        \item In elliptic curves $E(\mathbb{F}_p)$, the SDLP degenerates to a trivial problem solvable in constant time, making it strictly easier than the corresponding DLP.
        
        \item In elementary abelian groups $\mathbb{F}_p^n$, the situation reverses: DLP is trivial while SDLP can require up to $O(p^{n/2+1})$ operations when the automorphism matrix has eigenvalue 1, or $O(p^{n/2})$ when it does not.

    \end{itemize}
    
    \item \textbf{Structural Insights}: Our analysis revealed that the non-abelian nature of semidirect products introduces fundamental constraints. We showed that the SDLP can be reformulated as a generalized discrete logarithm problem in the form $u^t v^t = w$, which enabled systematic analysis using adapted classical algorithms.
    
    \item \textbf{Implications for Cryptographic Design}: The assumption that more complex algebraic structures automatically yield harder computational problems is not supported by our findings. The bounded automorphism groups in elliptic curves and the direct solvability in vector spaces demonstrate that additional structure can sometimes facilitate rather than hinder attacks.
\end{enumerate}

Our experimental validation confirms these theoretical insights, with implementations demonstrating the predicted complexity behaviors across all tested platforms. Notably, the presence of eigenvalue 1 in automorphism matrices of elementary abelian groups increases the problem difficulty by exactly the factor predicted by our analysis. Interestingly, while SDLP sometimes has larger asymptotic complexity than DLP, our experiments reveal that the actual cycle periods in SDLP are often significantly smaller than the corresponding group orders in DLP. For instance, in finite fields, even though both problems have comparable worst-case complexity, the SDLP period is typically much smaller than $p-1$ due to small values of $\gcd(k-1, p-1)$ in practice, which can make SDLP instances somewhat easier on average despite similar theoretical bounds.

An important observation about the non-abelian structure is that it creates fundamental obstacles for adapting other classical DLP algorithms. For instance, Pollard's rho algorithm \cite{pollard78dl}, which achieves constant space complexity for standard DLP, cannot be directly adapted to SDLP. In the standard setting, Pollard's rho generates a pseudorandom sequence of group elements of the form $g^{\alpha_i} h^{\beta_i}$ until a collision is detected: $g^{\alpha_i} h^{\beta_i} = g^{\alpha_j} h^{\beta_j}$. Since $h = g^x$ for the unknown discrete logarithm $x$, this collision yields $g^{\alpha_i + x\beta_i} = g^{\alpha_j + x\beta_j}$, which in a cyclic group of order $n$ gives us $\alpha_i + x\beta_i \equiv \alpha_j + x\beta_j \pmod{n}$. Rearranging, we obtain $x(\beta_i - \beta_j) \equiv \alpha_j - \alpha_i \pmod{n}$, allowing us to solve for $x$ when $\gcd(\beta_i - \beta_j, n) = 1$.

However, in the semidirect product setting with our GDLP formulation $u^t v^t = w$, the non-commutativity prevents this approach. A natural attempt would be to generate sequences of the form $u^a v^a w^b$, but upon finding a collision, we cannot rearrange terms to isolate the unknown $t$. The abelian property $g^{\alpha}h^{\beta} = h^{\beta}g^{\alpha}$ that enables the algebraic manipulation in standard Pollard's rho simply does not hold in semidirect products. Moreover, if the semidirect product were abelian—which would enable such rearrangements—the condition $(g_1, \phi_1)(g_2, \phi_2) = (g_2, \phi_2)(g_1, \phi_1)$ would force all automorphisms to act trivially. To see this, consider any $g \in G$ and $\phi \in \Aut(G)$, and let $(g, \text{id})$ commute with $(e, \phi)$. This gives us $(g, \text{id})(e, \phi) = (e, \phi)(g, \text{id})$, which expands to $(g, \phi) = (\phi(g), \phi)$. Therefore $g = \phi(g)$ for all $g \in G$, meaning $\phi$ is the identity automorphism. With only trivial automorphisms, the SDLP reduces to standard DLP since $h = \prod_{i=0}^{t-1} \sigma^i(g) = \prod_{i=0}^{t-1} g = g^t$. This structural constraint suggests that memory-efficient algorithms for SDLP may require fundamentally new approaches beyond adapting existing DLP methods.

These results contribute to the broader understanding of group-based cryptographic hardness assumptions in the post-quantum era. While SDLP may resist quantum attacks in certain settings, our work shows that it does not provide reliable classical hardness advantages across all platforms. This underscores the importance of careful platform selection and thorough classical cryptanalysis when designing cryptographic schemes based on novel group-theoretic problems.

Future research directions include:
\begin{itemize}
    \item Developing novel constant-space algorithms for SDLP that respect the non-abelian structure of semidirect products, potentially achieving better space-time tradeoffs than our BSGS adaptation.
    \item Investigating whether other algebraic structures beyond semidirect products might provide more uniform classical hardness guarantees across different platforms.
    \item Exploring the theoretical limits of classical hardness in non-abelian group settings, particularly understanding when additional structure helps versus hinders cryptanalysis.
\end{itemize}

The balance between structure and hardness revealed in this work highlights the importance of thorough cryptanalysis when proposing new post-quantum primitives based on group-theoretic problems.

\section*{Acknowledgments}
Muhammad Imran is supported by EPSRC through grant number EP/V011324/1.o

\bibliographystyle{splncs04}
\bibliography{myrefs}

\end{document}